\title{On the Synchronization Rate for e-machines}
\author{Mikhail V. Berlinkov\thanks{Supported by research grant of Prof. J\"{u}rgensen, University of Western Ontario, Canada}}
\authorrunning{Mikhail V. Berlinkov}
\institute{ Institute of Mathematics and Computer Science,\\
Ural Federal University, 620000 Ekaterinburg, Russia\\
\email{berlm@mail.ru}}
\DeclareSymbolFont{rsfscript}{OMS}{rsfs}{m}{n}
\DeclareSymbolFontAlphabet{\mathrsfs}{rsfscript}
\begin{document}
\maketitle

\begin{abstract}
It is known, that an $\epsilon$-machine is either exactly or
asymptotically synchronizing. In the exact case, the observer can
infer the current machine state after observing $L$ generated
symbols with probability $1-a^L$ where $0 \leq a<1$ is a so-called
synchronization rate constant. In the asymptotic case, the
probability of the correct prediction the current machine state
after observing $L$ generated symbols tends to $1$ exponentially
fast as $1-b^L$ for $0<b<1$ and the infimum of such $b$ is a
so-called prediction rate constant.

Hence the synchronization and prediction rate constants serve as
natural measures of synchronization for $\epsilon$-machines. In the
present work we show how to approximate these constants in
polynomial time in terms of the number of machine states.
\end{abstract}

\section{Preliminaries}

A \emph{deterministic finite automata} (DFA) $\mathrsfs{A}$ is a
triple $\langle Q,\Sigma,\delta \rangle$ where $Q$ is the state set,
$\Sigma$ is the input alphabet and $\delta: Q \times \Sigma
\rightarrow Q$ is the \emph{transition function}. If $\delta$ is
completely defined on $Q \times \Sigma$ then $\mathrsfs{A}$ is
called \emph{complete}, otherwise $\mathrsfs{A}$ is called
\emph{partial}. The function $\delta$ extends uniquely to a function
$Q\times\Sigma^*\to Q$, where $\Sigma^*$ stands for the free monoid
over $\Sigma$; the latter function is still denoted by $\delta$.
When we have specified a DFA $\mathrsfs{A}=\langle
Q,\Sigma,\delta\rangle$, we can simplify the notation by writing
$S.w$ instead of $\{ \delta(q,w) \mid q \in S \}$ for a subset
$S\subseteq Q$ and a word $w \in \Sigma^*$. In what follows, we
assume $|\Sigma|>1$ because the singleton alphabet case is trivial
for considered problems. Given a subset of words $U \subseteq
\Sigma^{*}$ and a state $p$ denote by $U_p$ the subset of words from
$U$ labeling some path from $p$ in $\mathrsfs{A}$.

A DFA $\mathrsfs{A} = \langle Q,\Sigma,\delta \rangle$ is called
\emph{synchronizing} if there exists a word $w\in\Sigma^*$ such that
$|Q.w|=1$. Notice that here $w$ is not assumed to be defined at all
states. Each word $w$ with this property is said to be a
\emph{reset} word for $\mathrsfs{A}$. The minimum length of such
words is called the \emph{reset threshold} of $\mathrsfs{A}$ and is
denoted by $rt(\mathrsfs{A})$.

The synchronization of strongly connected partial automata as models
of \emph{$\epsilon$-machines} is one of the central object for
research in the theory of stationary information sources. The
synchronization and state prediction for stationary information
sources has many applications in information theory and dynamical
systems. An $\epsilon$-machine can be defined as a strongly
connected DFA with a probability distribution defined on the
outgoing arrows for each state, without states with
\emph{probabilistically equivalent future}
(see~\cite{Emach},\cite{Amach} for details). Hence by
$\epsilon$-machine we mean the tuple $\langle
Q,\Sigma,\delta,\textbf{P} \rangle$ where $\langle Q,\Sigma,\delta
\rangle$ is the strongly connected partial automaton and
$\textbf{P}: Q \times \Sigma \mapsto R_{+}$\footnote{We write
$\textbf{P}_q(a)$ for the probability of generating $a$ from $q$
instead of $\textbf{P}(q,a)$} is the probability distribution on the
outgoing arrows. Formally,
\begin{itemize}
\item{$\sum_{a \in \Sigma}\textbf{P}_q(a) = 1$ for each state $q \in Q$;}
\item{$\textbf{P}_q(a) = 0$ whenever $a$ is undefined on $q \in Q$;}
\item{for each states $p \neq q$ there is a word $u$ such
that $\textbf{P}_p(u) \neq \textbf{P}_q(u)$,}
\end{itemize}
where the probability $\textbf{P}_p(v)$ of generating a given word
$v = au$ from the state $p$ is defined inductively by
$\textbf{P}_p(a)\textbf{P}_{p.a}(u)$.

An $\epsilon$-machine is \emph{exactly synchronizable} or simply
\emph{exact} if the corresponding partial strongly connected
automaton is synchronizing in our terms. Given an initial
probability distribution $\pi: Q \mapsto R_{+}$ on the states of an
$\epsilon$-machine, the probability of generating a word $v$ is
equal to the weighted sum $\textbf{P}_{\pi}(v) = \sum_{q \in
Q}{\pi_q \textbf{P}_q(v)}$.

Based on various applications, there are two basic settings for
synchronization. In the first setting, the observer knows that an
error appeared in the corresponding system and he can apply the
reset sequence for reestablishing correct behavior of the system.
Apparently, the most natural measure of synchronization in this
setting is the reset threshold of the corresponding automaton. This
is because the reset threshold corresponds to the minimum time
required to find out the current system state.

In the second setting, observer either doesn't know that an error
appeared or cannot affect to the system after that. For such
situations it is natural to estimate either the probability of
generating a reset word (if it exists) by the system itself or the
average uncertainty in predicting the current state by the generated
sequence in a given period of time.

\section{Computing Synchronization Rate}

Let $\mathrsfs{A} = \langle Q,\Sigma,\delta,\textbf{P} \rangle$ be
an exact $\epsilon$-machine and denote $n=|Q|$. It has been proved
in~\cite{Emach} that the probability of generating only non-reset
words of length $L$ decay exponentially fast as
$src(\mathrsfs{A})^L$ where $0<src(\mathrsfs{A})<1$ is a so-called
\emph{synchronization rate constant} can be defined as
$$src(\mathrsfs{A}) = \lim_{L \rightarrow +\infty}{(\textbf{P}_{\pi}(NSYN_L))^{1/L}}$$ where
$NSYN_L$ is the set of non-reset words of length $L$ and $\pi$ is a
steady state distribution on the states. Since $\mathrsfs{A}$ is
strongly connected, this value doesn't depend on the initial
distribution $\pi$. It is also shown in~\cite{Emach} that the
synchronization rate constant can be approximated with any given
accuracy in exponential time in terms of the number of machine
states $n$. In this section we show how to approximate
$src(\mathrsfs{A})$ with any given accuracy in polynomial time. As
well as in~\cite{Emach}, for this purpose, let us consider the
auxiliary \emph{semi $\epsilon$-machine}\footnote{By semi
$\epsilon$-machine we mean a partial automaton with edges weighted
by real numbers in the range $[0,1]$. We still refer to these
weights as probabilities.} $\mathrsfs{A}_2 = \langle Q_2 , \Sigma,
\delta_2 ,\textbf{P} \rangle$ where $Q_2 = \{ (p,q) \mid p,q \in Q,
p \neq q \}$. Given a pair of different states $p,q \in Q$ and a
letter $x \in \Sigma$, define
\begin{equation}\delta_2((p,q),x) = \begin{cases}
(\delta(p,x),\delta(q,x)),\ & |\{p.x,q.x\}|=2 \\
\text{undefined otherwise}.
\end{cases}
\end{equation}
\begin{equation}
\label{Pr2def} \textbf{P}_{(p,q)}(x) = \begin{cases}
\textbf{P}_p(x),\ & |\{p.x,q.x\}|=2 \\
0,\ &  \text{ otherwise}.
\end{cases}
\end{equation}

Let $T(\mathrsfs{A}_2,x)$ be the $n(n-1) \times n(n-1)$ matrix of
the transition probabilities of $\mathrsfs{A}_2$ for the letter $x
\in \Sigma$ indexed by the states of $Q_2$. That is, for each two
states of $s,t \in Q_2$ the entry on the intersection of $s$-th row
and $t$-th column is given by
\begin{equation}T(\mathrsfs{A}_2,x)_{s,t}=
\begin{cases}
\textbf{P}_s(x),\ & \delta_2(s,x) = t \\
0,\ &  \text{ otherwise}.
\end{cases}
\end{equation}
The transition probability matrix $T(\mathrsfs{A}_2)$ of
$\mathrsfs{A}_2$ is now defined as the sum $$T(\mathrsfs{A}_2) =
\sum_{x \in \Sigma}{T(\mathrsfs{A}_2,x)}.$$

Let $R^L_{p,q}$ be the sum of $(p,q)$-th row entries of
$T(\mathrsfs{A}_2)^L$. Notice that by definition~\ref{Pr2def},
$R^L_{p,q}$ is the probability of generating words by
$\epsilon$-machine $\mathrsfs{A}$ of length $L$ from $p$ which
doesn't merge the pair $\{p,q\}$. Define also $$R^L_{p} = \sum_{q
\in Q, p \neq q}{R^L_{p,q}} \quad \text{ and } \quad MaxR^L_{p} =
\max_{q \in Q, p \neq q}{R^L_{p,q}}.$$ A construction of a
polynomial time algorithm for approximation synchronization rate
constants is based on the following theorem.
\begin{theorem}
\label{th_comp_synchconst}For each state $p \in Q$ the probability
of generating a non-reset sequence of length $L$ from $p$ is upper
bounded by $R^L_{p}$ and is lower bounded by $MaxR^L_{p}$.
\end{theorem}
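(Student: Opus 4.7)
The plan is to first establish a clean probabilistic interpretation of the quantity $R^L_{p,q}$, and then derive both bounds from it essentially for free.

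First I would show, by induction on $L$, that $R^L_{p,q}$ equals the probability that the $\epsilon$-machine, started at $p$, generates a word $w$ of length $L$ such that $w$ is defined at $q$ and $q.w \neq p.w$ — in short, the probability that a random length-$L$ word from $p$ does not merge the pair $\{p,q\}$. The base case $L=0$ is trivial. For the inductive step, I would unfold one application of $T(\mathrsfs{A}_2)$: by definition of $\delta_2$ and $\textbf{P}_{(p,q)}$, a nonzero entry $T(\mathrsfs{A}_2,x)_{(p,q),(p',q')}$ corresponds exactly to $x$ being defined at both $p$ and $q$ with $p.x=p'$, $q.x=q'$, $p'\neq q'$, and carries weight $\textbf{P}_p(x)$, which factors correctly with $\textbf{P}_{p.x}$ via the inductive hypothesis.

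Once this interpretation is in hand, both inequalities are one-liners. For a length-$L$ word $w$ generated from $p$, let $E_q$ denote the event ``$w$ is defined at $q$ and $q.w\neq p.w$''. A word is non-reset iff $|Q.w|\geq 2$, which, since $w$ is always defined at the generating state $p$, happens iff $E_q$ occurs for some $q\neq p$. Hence
\[
\textbf{P}_p(NSYN_L)=\Pr\!\Bigl[\bigcup_{q\neq p} E_q\Bigr],
\]
and the upper bound $\textbf{P}_p(NSYN_L)\leq \sum_{q\neq p}\Pr[E_q]=R^L_p$ follows by the union bound, while the lower bound $\textbf{P}_p(NSYN_L)\geq \max_{q\neq p}\Pr[E_q]=MaxR^L_p$ follows because each single event $E_q$ forces $w$ to be non-reset.

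The only real obstacle is the first step: being careful about partiality. I have to make sure that ``undefined'' transitions in $\mathrsfs{A}_2$ correctly correspond to either $x$ being undefined at $q$ or the pair collapsing, and that in the sum $R^L_{p,q}=\sum_{(p',q')\in Q_2}T(\mathrsfs{A}_2)^L_{(p,q),(p',q')}$ no probability mass is leaked or double-counted. After that bookkeeping, the rest is immediate from the two elementary probabilistic inequalities above.
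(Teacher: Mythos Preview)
Your proposal is correct and follows essentially the same approach as the paper: both arguments rest on interpreting $R^L_{p,q}$ as the $\textbf{P}_p$-probability that a length-$L$ word fails to merge $\{p,q\}$, then bounding the non-reset event between a single $E_q$ and the union $\bigcup_{q\neq p}E_q$. Your version is simply more explicit about the induction establishing that interpretation and about framing the upper bound as a union bound, whereas the paper leaves both implicit.
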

\begin{proof}
Let $v$ be a non-reset word of length $L$ generated by
$\mathrsfs{A}$ from the state $p$. Then there exists another state
$q \in Q$ such that $v$ doesn't merge $p$ and $q$ in $\mathrsfs{A}$.
This means that $v$ takes the pair $(p,q)$ to another proper pair
$(p',q')$ and thus the probability of generating this word is
included in $R^L_{p,q}$. The upper bound follows.

Now let $q$ be the state such that $R^L_{p,q} = MaxR^L_{p}$. Let $u$
be a word of length $L$ which doesn't merge the pair $(p,q) \in
Q_2$, that is, $u$ is an arbitrary word which probability is
included in the $(p,q)$-th row. Then $u$ is non-reset and the lower
bound follows.
\end{proof}

Since $\mathrsfs{A}$ is strongly connected and synchronizing, it has
a unique positive steady state distribution $\pi \in R^n_{+}$
(see e.g.~\cite{MyMchains}). Hence we get the following corollary.
\begin{corollary}
\label{cor_bounds}
\begin{equation}
\label{eq_bounds} \sum_{p \in Q}{\pi_p MaxR^L_{p}} \leq
\textbf{P}_{\pi}(NSYN_L) \leq \sum_{p \in Q}{\pi_p
R^L_{p}}.
\end{equation}
\end{corollary}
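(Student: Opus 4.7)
The plan is to derive the corollary as an immediate weighted-average consequence of Theorem \ref{th_comp_synchconst}. By definition of the weighted probability, $\textbf{P}_\pi(NSYN_L) = \sum_{p \in Q} \pi_p \textbf{P}_p(NSYN_L)$, where $\textbf{P}_p(NSYN_L)$ is the probability that $\mathrsfs{A}$ generates a non-reset word of length $L$ starting from state $p$. Theorem \ref{th_comp_synchconst} already furnishes the pointwise two-sided bound $MaxR^L_p \leq \textbf{P}_p(NSYN_L) \leq R^L_p$ for every $p \in Q$.

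First I would invoke that theorem at each individual state $p$. Then, since $\mathrsfs{A}$ is strongly connected and synchronizing, the steady state distribution $\pi$ exists, is unique, and is strictly positive (as cited just before the corollary), so multiplying the pointwise inequalities by $\pi_p \geq 0$ preserves their direction. Summing over $p \in Q$ yields exactly the inequality chain in \eqref{eq_bounds}. No other ingredient is needed.

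The only subtlety, rather than a genuine obstacle, is a bookkeeping check: one must confirm that the quantity $\textbf{P}_p(NSYN_L)$ appearing implicitly in the weighted sum is the same quantity bounded in Theorem \ref{th_comp_synchconst}, i.e.\ that the probabilities assigned by $\textbf{P}$ on $\mathrsfs{A}$ and by the lifted weights on $\mathrsfs{A}_2$ agree on words that do not merge the tracked pair. This is immediate from \eqref{Pr2def} and the inductive definition of $\textbf{P}_p(v)$, so the corollary follows in one line once the bound is averaged against $\pi$.
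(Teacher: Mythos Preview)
Your proposal is correct and matches the paper's approach exactly: the paper does not even spell out a proof, merely noting that the corollary follows from Theorem~\ref{th_comp_synchconst} together with the existence of the unique positive steady state distribution $\pi$. Your expansion---writing $\textbf{P}_\pi(NSYN_L)=\sum_{p}\pi_p\textbf{P}_p(NSYN_L)$, applying the pointwise bounds, and summing against the nonnegative weights $\pi_p$---is precisely the intended one-line argument.
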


Thus we get that the probability of generating non-reset words of
length $L$ is lower bounded by $\pi_{min}
\parallel T(\mathrsfs{A}_2)^L \parallel_{1}$ and is upper bounded by
$n^2 \pi_{max} \parallel T(\mathrsfs{A}_2)^L \parallel_{1}$.
Taking the power $1/L$ we get that the synchronization rate constant
is given by the maximal eigenvalue of the transition probabilitiy
matrix $T(\mathrsfs{A}_2)$ whence the main result of this section
follows.
\begin{theorem}
\label{main_th} Given an exact $n$-state $\epsilon$-machine
$\mathrsfs{A}$, its synchronization rate constant can be
approximated in a polynomial time $\phi(n,\delta)$ for arbitrary
small absolute error $\delta>0$.
\end{theorem}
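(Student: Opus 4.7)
The plan is to formalize the observation in the paragraph preceding the theorem that $src(\mathrsfs{A})$ coincides with the spectral radius $\rho(T(\mathrsfs{A}_2))$, and then to invoke a standard polynomial-time procedure for approximating the largest eigenvalue of a nonnegative matrix.

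First I would apply Gelfand's formula $\rho(M) = \lim_{L \to \infty} \|M^L\|^{1/L}$ (valid for any submultiplicative matrix norm) to the two-sided bound
$$\pi_{\min}\, \|T(\mathrsfs{A}_2)^L\|_1 \leq \textbf{P}_{\pi}(NSYN_L) \leq n^2\, \pi_{\max}\, \|T(\mathrsfs{A}_2)^L\|_1$$
extracted from Corollary \ref{cor_bounds}. Since $\pi$ is a positive steady-state distribution of a strongly connected chain, both $\pi_{\min}$ and $n^2\pi_{\max}$ are strictly positive constants independent of $L$; raising to the power $1/L$ and letting $L \to \infty$ therefore squeezes the two sides together and yields the identity $src(\mathrsfs{A}) = \rho(T(\mathrsfs{A}_2))$.

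Next I would invoke the Perron-Frobenius theorem: because $T(\mathrsfs{A}_2)$ is a nonnegative matrix of order $n(n-1)$ with entries in $[0,1]$, its spectral radius equals its largest real eigenvalue. The characteristic polynomial of such a matrix is computable in polynomial time (for instance by the Faddeev-Leverrier method), and its largest real root can then be isolated and refined to absolute accuracy $\delta$ by Sturm-sequence bisection in time polynomial in $n$ and $\log(1/\delta)$. Composing the two approximations yields the required function $\phi(n,\delta)$.

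The main delicacy, and the step that would need to be spelled out carefully in a full write-up, is the choice of computational model: in full generality the transition probabilities $\textbf{P}_q(x)$ are real, so the polynomial-time claim should be made either in the rational bit-complexity model (with input probabilities given as rationals of bounded bit length) or in a fixed-precision arithmetic model. Under either convention the coefficients of the characteristic polynomial retain polynomially bounded bit sizes, so Sturm-based root approximation inherits polynomial running time and the composite algorithm runs in time polynomial in $(n,\log(1/\delta))$ as asserted.
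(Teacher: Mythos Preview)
Your proposal is correct and follows exactly the paper's route: use the two-sided estimate coming from Corollary~\ref{cor_bounds} together with Gelfand's formula to identify $src(\mathrsfs{A})$ with the spectral radius of $T(\mathrsfs{A}_2)$, then appeal to a polynomial-time eigenvalue approximation. Your write-up is in fact more careful than the paper's, which states the eigenvalue identification in a single sentence and asserts the polynomial-time consequence without naming a concrete algorithm or discussing the bit-complexity model.
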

Notice also, that Corollary~\ref{cor_bounds} can be used to estimate
the probability of generating non-reset words of a given length in a
polynomial time.

\section{Computing Prediction Rate}

In the previous section we show how to compute synchronization rate
constants for exact $\epsilon$-machines. It turns out that if an
$\epsilon$-machine is not exact, it is still can be synchronized but
only asymptotically, that is, for almost every infinite word
$\overrightarrow{x}$ the observer uncertainty decay exponentially
fast as $a^L$ after reading $L$ first symbols of $x$~\cite{Amach}.

Let an $\epsilon$-machine $\mathrsfs{A} = \langle Q=\{1,2, \dots,
n\},\Sigma,\delta,\textbf{P} \rangle$ generates a word $w$, denote
by $\phi(w) \in R^n$ the observer \emph{belief distribution}, that
is,
$$\phi(w)_q = \frac{\sum_{p \in Q \mid p.w = q}{\pi_p \textbf{P}_p(w)}}{\sum_{p \in Q}{\pi_p \textbf{P}_p(w)} }$$ where $q \in \{1,2,\dots, n\}$ and $\pi$ is the initial distribution of $\mathrsfs{A}$.
Clearly, $\phi(w)$ is a stochastic vector and $q$-th entry is equal
to the probability that $\mathrsfs{A}$ is in the state $q$ after
generating $w$. In particular, if $w$ is reset then $\phi(w)$ has
only one non-null entry, equals $1$. Let $\Phi_L \equiv \phi:
\Sigma^L \mapsto R^n$ be the random variable for the belief
distribution over states induced by the first length-$L$ word the
machine generates, and $\overline{S}_L$ be the most likely state in
$\Phi_L$ (if a tie the lowest numbered state is taken). Denote by
$Q_L \equiv 1-\textbf{P}(\overline{S}_L)$ the combined probability
of all other states in the distribution $\Phi_L$. Then
$\mathrsfs{A}$ is said to be \emph{asymptotically synchronizing} if
$\textbf{P}(Q_L > \delta)$ vanishes to $0$ while $L \rightarrow
+\infty$ for each $\delta>0$. In other words, for almost every
infinite word the observer uncertainty in predicting the current
machine state vanishes to $0$ (see~\cite{Amach} for details).

It is proved in~\cite{Amach} that actually each $\epsilon$-machine
is asymptotically synchronizing, and $Q_L$ vanishes to $0$
exponentially fast, that is, $\textbf{P}(Q_L > a^L)$ also vanishes
to $0$ for some $0<a<1$ while $L \rightarrow +\infty$. However, no
algorithm for computing infimum of such $a$ is given
in~\cite{Amach}. Let us call this infimum the \emph{prediction rate
constant} of an $\epsilon$-machine and denote it by
$prc(\mathrsfs{A})$. Below we show that the prediction rate constant
is always positive whenever $\epsilon$-machine is not exact, present
a polynomial time algorithm for approximating it, and simultaneously prove the
aforementioned result from~\cite{Amach}, apparently in a simpler
way. Notice also that the prediction rate constant is equal to $0$
for exact $\epsilon$-machines.

In what follows, let $\mathrsfs{A}$ be non-exact. Suppose
$\mathrsfs{A}$ generates a word $w$. Denote by $f_w$ the state with
maximal $\textbf{P}_{p}(w)$ among $p \in Q$ and by $s_w$ the state
with maximal $\textbf{P}_p(w)$ among $p \in Q$ such that $p.w \neq
f_w.w$ (if a tie the lowest numbered state is taken). In these
terms, we can bound $Q_L(w)$ as follows.

\begin{lemma}
\label{lem_ql_bounds} For some constants $c_1,c_2$ depended only on
the machine, we have
\begin{equation}
\label{eq_q_bounds}c_1
\frac{\textbf{P}_{s_w}(w)}{\textbf{P}_{f_w}(w)} \leq Q_L(w) \leq c_2
\frac{\textbf{P}_{s_w}(w)}{\textbf{P}_{f_w}(w)}.
\end{equation}
\end{lemma}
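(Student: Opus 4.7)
The plan is to rewrite $Q_L(w)$ as a ratio of weighted sums and then estimate numerator and denominator using the definitions of $f_w$ and $s_w$, with $\pi_{\min}$ and $\pi_{\max}$ (the minimum and maximum of the unique positive steady state distribution) absorbed into the constants $c_1,c_2$. Concretely, set $M=\overline{S}_L(w)$, $D(w)=\sum_{p\in Q}\pi_p\textbf{P}_p(w)$, $A(w)=\sum_{p:\,p.w=M}\pi_p\textbf{P}_p(w)$, and $B(w)=D(w)-A(w)=\sum_{p:\,p.w\neq M}\pi_p\textbf{P}_p(w)$, so that $Q_L(w)=B(w)/D(w)$. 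Since $\textbf{P}_{f_w}(w)$ is maximal among all $\textbf{P}_p(w)$, a uniform sandwich $\pi_{\min}\textbf{P}_{f_w}(w)\le D(w)\le n\pi_{\max}\textbf{P}_{f_w}(w)$ is immediate, which already takes care of the denominator.

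For the upper bound, I split on whether $M=f_w.w$. If $M=f_w.w$, then every $p$ contributing to $B(w)$ satisfies $p.w\neq f_w.w$, so $\textbf{P}_p(w)\le \textbf{P}_{s_w}(w)$ by definition of $s_w$, giving $B(w)\le n\pi_{\max}\textbf{P}_{s_w}(w)$; combined with $D(w)\ge \pi_{\min}\textbf{P}_{f_w}(w)$ this yields the desired ratio. If $M\neq f_w.w$, then the $f_w$-summand lies inside $A(w)$ while the dominant mass $\textbf{P}_{f_w}(w)$ also has to be matched by states with $p.w=M\neq f_w.w$; using $A(w)\le n\pi_{\max}\textbf{P}_{s_w}(w)$ together with $A(w)\ge N_{f_w.w}(w)\ge \pi_{\min}\textbf{P}_{f_w}(w)$ (since $M$ is most likely) forces $\textbf{P}_{s_w}(w)/\textbf{P}_{f_w}(w)\ge \pi_{\min}/(n\pi_{\max})$, whence the trivial bound $Q_L(w)\le 1$ suffices.

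For the lower bound I again split on $M$. If $M=f_w.w$, then the summand $p=s_w$ (which satisfies $s_w.w\neq f_w.w=M$) lies inside $B(w)$, giving $B(w)\ge \pi_{\min}\textbf{P}_{s_w}(w)$ and hence $Q_L(w)\ge B(w)/D(w)\ge \pi_{\min}\textbf{P}_{s_w}(w)/(n\pi_{\max}\textbf{P}_{f_w}(w))$. If $M\neq f_w.w$, then the $f_w$-summand itself lies in $B(w)$, so $B(w)\ge \pi_{\min}\textbf{P}_{f_w}(w)\ge \pi_{\min}\textbf{P}_{s_w}(w)$ and the same lower bound on $Q_L(w)$ follows. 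Taking $c_1=\pi_{\min}/(n\pi_{\max})$ and $c_2=n\pi_{\max}/\pi_{\min}$, which depend only on $\mathrsfs{A}$, completes the proof.

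The only real subtlety is the second case of each bound, where the most likely state is not the image of $f_w$; it has to be handled by the observation that then the ratio $\textbf{P}_{s_w}(w)/\textbf{P}_{f_w}(w)$ is already bounded away from $0$ by a machine-dependent constant, so both sides of \eqref{eq_q_bounds} are of the same order as $1$. Everything else reduces to comparing $\textbf{P}_p(w)$ with the maximum $\textbf{P}_{f_w}(w)$ and with the runner-up $\textbf{P}_{s_w}(w)$.
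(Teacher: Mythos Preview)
Your argument is correct and follows essentially the same route as the paper: express $Q_L(w)$ as a ratio $B(w)/D(w)$ and sandwich numerator and denominator using the extremality of $f_w$ and $s_w$ together with $\pi_{\min},\pi_{\max}$; the paper compresses your case split into the single observation that $f_w.w\neq s_w.w$ forces one of $f_w,s_w$ to land outside the most likely state $q$, and obtains slightly tighter constants $c_1=\pi_{\min}$, $c_2=\pi_{\max}/\pi_{\min}$ without the extra factor $n$. One small slip to fix: in the case $M\neq f_w.w$ you write ``the $f_w$-summand lies inside $A(w)$'', but since $f_w.w\neq M$ it actually lies in $B(w)$ (equivalently in $N_{f_w.w}(w)$); the chain $A(w)\le n\pi_{\max}\textbf{P}_{s_w}(w)$ and $A(w)\ge N_{f_w.w}(w)\ge \pi_{\min}\textbf{P}_{f_w}(w)$ that you then invoke is nevertheless correct, so the conclusion is unaffected.
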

\begin{proof}
By the definition of $Q_L$ we have $Q_L(w) = \frac{\sum_{p \in Q
\mid p.w \neq q}{\pi_p \textbf{P}_p(w)}}{\sum_{p \in Q}{\pi_p
\textbf{P}_p(w)}}$ for $q \in Q$ with maximal $\phi(w)_q$.

Since $f_w.w \neq s_w.w$, either $f_w.w \neq q$ or $s_w.w \neq q$
whence at least one of the components $\pi_{s_w}\textbf{P}(w),
\pi_{f_w}\textbf{P}(w)$ is contained in the sum $\sum_{p \in Q \mid
p.w \neq q}{\pi_p \textbf{P}_p(w)}$. Since also $\textbf{P}_{s_w}(w)
\leq \textbf{P}_{f_w}(w)$ we get
$$\sum_{p \in Q \mid p.w \neq q}{\pi_p \textbf{P}_p(w)} \geq
\pi_{min}\textbf{P}_{s_w}(w),$$ where $\pi_{min} = \min_{q \in
Q}{\pi_q}$. By the choice of $q$ and $s_w$ we get
$$\sum_{p \in Q \mid p.w \neq q}{\pi_p \textbf{P}_p(w)} \leq 1-\pi_{f_w}\textbf{P}_{f_w}(w) \leq \pi_{s_w}\textbf{P}_{s_w}(w).$$
By the definition of $f_w$ we also get $$\pi_{min}
\textbf{P}_{f_w}(w) \leq \sum_{p \in Q}{\pi_p \textbf{P}_p(w)} \leq
\textbf{P}_{f_w}(w).$$ Thus we can choose $c_1 = \pi_{min}, c_2 =
\frac{\pi_{max}}{\pi_{min}}$ and the lemma follows.
\end{proof}

The following lemma gives us a way to calculate the prediction rate
constant.
\begin{lemma}
\label{lem_calc_PR}
$$\lim_{L \rightarrow \infty}{\frac{1}{\mathbb{E}(1/Q_L^{1/L})}} \leq prc(\mathrsfs{A}) \leq \lim_{L \rightarrow \infty}{\mathbb{E}(Q_L^{1/L})}.$$
\end{lemma}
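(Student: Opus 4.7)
The plan is to establish the two inequalities separately. For the lower bound a short Markov-type manipulation suffices; for the upper bound one has to upgrade Markov with a concentration (ergodic) step. For the lower bound, I would fix any $a > prc(\mathrsfs{A})$; by definition of $prc$, $\mathbb{P}(Q_L > a^L) \to 0$, equivalently $\mathbb{P}(Q_L^{-1/L} \geq 1/a) \to 1$. Since $Q_L \leq 1$ forces $Q_L^{-1/L} \geq 1$ pointwise, we get
$$\mathbb{E}(Q_L^{-1/L}) \;\geq\; (1/a)\,\mathbb{P}(Q_L^{-1/L} \geq 1/a) \;\to\; 1/a,$$
so $\lim_L 1/\mathbb{E}(Q_L^{-1/L}) \leq a$. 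Letting $a \searrow prc(\mathrsfs{A})$ completes this direction.

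For the upper bound, a plain Markov inequality yields only $\mathbb{P}(Q_L^{1/L} > a) \leq \mathbb{E}(Q_L^{1/L})/a$, which for $U := \lim_L \mathbb{E}(Q_L^{1/L})$ and $a$ slightly above $U$ stays bounded below $1$ and need not vanish. I would therefore upgrade to a concentration statement showing that $Q_L^{1/L}$ converges to a deterministic $\alpha$. By Lemma~\ref{lem_ql_bounds}, $(1/L)\log Q_L(w)$ coincides with $(1/L)[\log \textbf{P}_{s_w}(w) - \log \textbf{P}_{f_w}(w)]$ up to an $O(1/L)$ additive error, and the bracketed difference decomposes as a Birkhoff sum of log-probability increments along the trajectory of the pair chain $\mathrsfs{A}_2$ from Section~2. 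Applying the ergodic theorem on $\mathrsfs{A}_2$, whose recurrent class avoids the diagonal because $\mathrsfs{A}$ is strongly connected and non-exact, yields an almost-sure limit $Q_L^{1/L} \to \alpha \in (0,1)$. Bounded convergence (since $Q_L^{1/L} \in [0,1]$) then forces $U = \alpha$, while the same a.s.\ convergence gives $\mathbb{P}(Q_L > a^L) \to 0$ for every $a > \alpha$, hence $prc(\mathrsfs{A}) \leq \alpha = U$.

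The main obstacle is the concentration/ergodic step in the upper bound: one must pin down the precise Markov chain whose Birkhoff averages govern $\log Q_L$, handle the selection rule for $(f_w, s_w)$ (which depends on the whole prefix $w$, not just the current pair state), and verify the ergodicity hypotheses needed to invoke the ergodic theorem. Note that Jensen/AM--HM gives $1/\mathbb{E}(Q_L^{-1/L}) \leq \mathbb{E}(Q_L^{1/L})$ for every $L$, so the two stated bounds are consistent and, once the ergodic argument identifies the upper bound as $\alpha$, the lower bound is forced to $\alpha$ as well, yielding $prc(\mathrsfs{A}) = \alpha$ in the limit.
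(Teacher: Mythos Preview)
Your lower-bound argument is correct and is exactly the paper's: if $a>prc(\mathrsfs{A})$ then $\mathbb{P}(Q_L^{-1/L}\ge 1/a)\to 1$, so $\mathbb{E}(Q_L^{-1/L})\ge 1/a$ in the limit, and one lets $a\downarrow prc(\mathrsfs{A})$.

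For the upper bound you are right that a bare Markov inequality yields only $\mathbb{P}(Q_L^{1/L}>a)\le \mathbb{E}(Q_L^{1/L})/a$, which stays below $1$ but need not tend to $0$. The paper's own proof of this lemma is just as terse on this point: it records the same Markov-type implication and then immediately says that what remains is to show $1/\mathbb{E}(1/Q_L^{1/L})\sim \mathbb{E}(Q_L^{1/L})$, which it establishes only in the later Lemmas~\ref{lem_about_avg}--\ref{lem_upper_bound}. So you are not missing an elementary trick; the paper simply defers the concentration work rather than doing it inside the present lemma.

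Your proposed fix---an ergodic/concentration argument on $\mathrsfs{A}_2$---is the right direction and is essentially what the paper carries out later, with two differences. First, the paper does not apply a single ergodic theorem to $\mathrsfs{A}_2$: that semi-machine is not irreducible, and its several deadlock components $M\in\mathrsfs{M}$ may carry different rates $\mathbb{E}_M$. Instead the paper passes to the edge machine of each $M$ and invokes the Hoeffding-type Lemma~\ref{lem_mc_vars} component by component. Second, your assertion that $Q_L^{1/L}\to\alpha$ for one deterministic $\alpha$ is not automatic: with several components one would a priori expect a random limit. What forces a single value is the selection rule for $s_w$ (maximising $\textbf{P}_p(w)$ among states with $p.w\neq f_w.w$), which drives the ratio $\textbf{P}_{s_w}(w)/\textbf{P}_{f_w}(w)$ toward the component with minimal $\mathbb{E}_M$; this is exactly what Lemmas~\ref{lem_lower_bound} and~\ref{lem_upper_bound} exploit to show both bounds here equal $\max_{M\in\mathrsfs{M}}\exp(-\mathbb{E}_M)$. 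The obstacles you flag (the non-Markovian dependence of $(f_w,s_w)$ on the whole prefix, and the reducibility of the pair chain) are precisely where that extra work lives.
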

\begin{proof}
Let $a>0, \delta>0$ be some constants. if $\textbf{P}(Q_L >
(a+\delta)^L)$ does not vanish to $0$, then $\mathbb{E}(Q_L^{1/L})
\geq a$ for $L$ big enough. From the other hand, if $\textbf{P}(Q_L
> (a-\delta)^L)$ vanishes to $0$ then $\mathbb{E}(1/Q_L^{1/L}) \geq
1/a$ for $L$ big enough. The lemma follows.
\end{proof}

Thus, in order to design an algorithm for computing prediction rate
constant, it is enough to calculate $\mathbb{E}(Q_L^{1/L})$ and then
to prove that $$1/\mathbb{E}(1/Q_L^{1/L}) \sim
\mathbb{E}(Q_L^{1/L}).$$

For these purposes we need the definition of associated
\emph{edge-machine} as in~\cite{Amach}. Let $M$ be an ergodic
irreducible Markov chain with the equilibrium distribution $(\rho_1,
\rho_2, \dots, \rho_l)$. Let $I(w, k, j)$ denotes the indicator
function of the transition from the state $k$ to the state $j$ by
the word $w$. That is, $I(w, k, j) = 1$ if $\delta(k,w)=j$ and $0$
otherwise. The edge machine $M_{edge}$ is the Markov chain whose
states are the outgoing edges of the original machine $M$. That is,
the states are the pairs $(x, k)$ such that $\textbf{P}_k(x) > 0$,
and the transition probabilities are defined as: $\textbf{P}((x, k)
\mapsto (y, j)) = \textbf{P}_j(y)I(x, k, j)$. A sequence of
$M_{edge}$ states visited by the Markov chain corresponds to a
sequence of edges visited by the original machine $M$ with the same
probabilities. It follows, that $M_{edge}$ is also ergodic and the
following remark is also straightforward.

\begin{remark}
\label{rem_edge_prob}The equilibrium distribution of $M_{edge}$ is
given by $(\rho_{p,x})_{p \in M, x \in \Sigma}$ where $\rho_{p,x} =
\rho_{p}\textbf{P}_p(x)$.
\end{remark}

In contrast to~\cite{Amach}, we consider $M_{edge}$ machines for
\emph{deadlock components} of original semi-machine $\mathrsfs{A}_2$
instead of $\mathrsfs{A}$. As in the previous section we consider
the same auxiliary automaton $\mathrsfs{A}_2 = \langle Q_2, \Sigma,
\delta_2 ,\textbf{P} \rangle$. Since $\mathrsfs{A}$ is not exact,
there are pairs $\{p,q\}$ that cannot be synchronized, that is
$|\{p,q\}.v| \neq 1$ for each word $u$ (this follows
from~\cite{Emach}[Theorem~3]). Such pairs are called
\emph{deadlock}. Let $\{p,q\}$ be a deadlock pair. It follows from
the definition, that for each $a \in \Sigma$ if $p.a$ is defined
then $q.a$ is also defined and $\{p.a,q.a\}$ is also deadlock. Hence
there are closed under the actions of the letters strongly connected
components of deadlock pairs in $\mathrsfs{A}_2$. Since
$\mathrsfs{A}$ is aperiodic, these components are also irreducible
$\epsilon$-machines. Let us denote the set of such components by
$\mathrsfs{M}$.

The following remark follows from the fact that the probability of
coming to closed components of the graph is positive.
\begin{remark}
\label{rem_out_of_M}There is a constant $0<c<1$ such that
$$\textbf{P}_z(\{w \mid \exists {p,q} (|\{p,q\}.w|>1, (p,q).w \notin
\mathrsfs{M} ) \} ) \leq c^{|w|}, \text{ for each } z \in Q.$$
\end{remark}
In simpler terms, it is stated in the remark that with probability
asymptotically closed to $1$ all pairs of states in $\mathrsfs{A}_2$
either merge or come to deadlock pairs. Let us also notice that the
constant $c$ from the remark is given by the dominant eigenvalue of
the transition matrix of $\mathrsfs{A}_2$ without deadlock
components whence it can be approximated in polynomial time.

As in~\cite{Amach} we use the theorem from~\cite{Glynn2002143} which
in simpler terms can be formulated as follows.

\begin{lemma}[\cite{Glynn2002143}]
\label{lem_mc_vars}Let $Z_0,Z_1, \dots$ be a finite-state,
irreducible Markov chain, with state set $R = \{1,2, \dots, K\}$ and
equilibrium distribution $\rho = (\rho_1, \rho_2, \dots, \rho_K)$.
Let $F: R \mapsto \mathbb{R}, Y_L = F(Z_L)$, and ${\overline{Y_L}} =
\frac{1}{L}(Y_0 + \dots + Y_{L-1})$. Denote also $E_{\rho}(F) =
\sum_{k}{\rho_k F(r_k)}$. Then, there exist $0<\alpha<1, \beta > 0$
such that for each $\epsilon>0$ and each state $k$ for $L$ big
enough we have
$$\textbf{P}_{k}( | \overline{Y_L} - E_{\rho}(F) | \geq \epsilon) \leq e^{- L \beta \epsilon^2} = \alpha^{L}.$$
\end{lemma}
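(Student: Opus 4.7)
The statement is a standard exponential concentration bound for the ergodic average of a functional of a finite irreducible Markov chain, so I would prove it by the classical tilted-matrix large deviation technique. After replacing $F$ by $F-E_{\rho}(F)$, which only shifts $\overline{Y_L}$ by the same constant, it suffices to exhibit $\beta>0$ such that for every small $\epsilon>0$ and every starting state $k$,
\begin{equation*}
\mathbf{P}_{k}\bigl(|\overline{Y_L}|\geq \epsilon\bigr)\leq 2e^{-L\beta\epsilon^2}\qquad\text{for all } L \text{ large enough.}
\end{equation*}

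\textbf{Main steps.} For a real parameter $t$, introduce the tilted transition matrix $\tilde P(t)$ with entries $\tilde P(t)_{ij}=P_{ij}e^{tF(j)}$, where $P$ is the transition matrix of the chain. A direct induction on the path decomposition of the expectation yields
\begin{equation*}
\mathbf{E}_{k}\bigl[e^{t(Y_0+\cdots+Y_{L-1})}\bigr]=e^{tF(k)}\bigl(\tilde P(t)^{L-1}\mathbf{1}\bigr)_{k}.
\end{equation*}
Since $P$ is irreducible and nonnegative, so is $\tilde P(t)$ for every real $t$; hence by Perron--Frobenius it has a simple dominant eigenvalue $\lambda(t)>0$ with strictly positive left/right eigenvectors, and both $\lambda(t)$ and its spectral projector depend analytically on $t$ in a neighbourhood of $0$. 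Consequently there is a finite constant $C>0$ independent of $k$ such that $\mathbf{E}_{k}[e^{tS_L}]\leq C\lambda(t)^{L}$ for small $|t|$ and every $L\geq 1$, where $S_L=Y_0+\cdots+Y_{L-1}$. Differentiating the eigenvalue equation at $t=0$ gives $\lambda(0)=1$ and, after centering, $\lambda'(0)=E_{\rho}(F)=0$, so $\log\lambda(t)=\tfrac12\sigma^{2}t^{2}+O(t^{3})$ for some $\sigma^{2}\geq 0$. Applying Markov's exponential inequality yields, for small $t>0$,
\begin{equation*}
\mathbf{P}_{k}(S_L\geq L\epsilon)\leq C\,e^{-tL\epsilon+L\log\lambda(t)},
\end{equation*}
and optimising in $t$ (taking $t\approx\epsilon/\sigma^{2}$ when $\sigma^{2}>0$) gives a bound of the form $Ce^{-L\beta\epsilon^{2}}$; absorbing the prefactor by taking $L$ large enough produces the desired form $e^{-L\beta'\epsilon^{2}}=\alpha^{L}$ with $\alpha=e^{-\beta'\epsilon^{2}}<1$. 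The lower tail follows by applying the same argument to $-F$.

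\textbf{Main obstacle.} The technical heart is controlling the Perron eigenvalue: proving analyticity in $t$, identifying $\lambda'(0)$ with $E_{\rho}(F)$, and arguing that $\sigma^{2}=\lambda''(0)$ is strictly positive in the non-degenerate case (when $\sigma^{2}=0$, the function $F$ must be constant along all recurrent excursions, and the bound reduces to a trivial one). A secondary, book-keeping, obstacle is the uniformity in the initial state $k$: this follows because the right Perron eigenvector of $\tilde P(t)$ has entries bounded away from $0$ and $\infty$ uniformly for small $|t|$, a property that in turn rests on irreducibility and the strict positivity of some power of $P$.
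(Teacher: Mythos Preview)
The paper does not give its own proof of this lemma: it is quoted verbatim from \cite{Glynn2002143} and used as a black box. Your proposal is therefore not to be compared with anything written in the paper, but with the argument in the cited source.

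Your tilted-matrix/Perron--Frobenius route is a correct and standard way to obtain such a bound, and the outline you give (analyticity of the dominant eigenvalue $\lambda(t)$, identification of $\lambda'(0)$ with $E_\rho(F)$, Chernoff optimisation) is sound. It is, however, genuinely different from the Glynn--Ormoneit argument. Their proof solves the Poisson equation $h-Ph=F-E_\rho(F)$ (which has a bounded solution because the chain is finite and irreducible, hence uniformly ergodic), rewrites the centred sum $S_L-LE_\rho(F)$ as a martingale with bounded increments plus a telescoping remainder that is $O(1)$, and then applies the Azuma--Hoeffding inequality directly. This yields $\beta$ explicitly in terms of $\|h\|_\infty$, with no spectral perturbation theory, no case distinction for $\sigma^2=0$, and automatic uniformity in the initial state. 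Your approach, by contrast, gives a sharper constant (essentially $\beta\approx 1/(2\sigma^2)$ when $\sigma^2>0$), but at the price of the analyticity machinery and the degenerate-variance case you flag as an obstacle. Either argument suffices for the use the paper makes of the lemma.
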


Now let $M \in \mathrsfs{M}$ and $M_{edge}$ be the corresponding
edge-machine. Given a state $r$ of $M_{edge}$, or equivalently an
edge between two states $(p,q)$ and $(p',q')$ labeled by $x$, define
$F(r) = \ln{\frac{\textbf{P}_{p}(x)}{\textbf{P}_{q}(x)}}$. Due to
Lemma~\ref{lem_mc_vars} for any $\epsilon>0$, we have
$$\textbf{P}_{(p,q)}( | \overline{Y_L} - E_{\rho}(F) | > \epsilon ) \leq \alpha^L$$
for $L$ big enough.

Given any initial pair of states $(p,q)$ and a random word $w$ of
length $L$, we have $$[\overline{Y_L} \mid Z_0 = (p,q)](w) =
\frac{1}{L}\ln{\frac{\textbf{P}_{p}(w)}{\textbf{P}_{q}(w)}}$$ by the
definition of $M_{edge}$ and $F$. Hence we get
\begin{equation}
\label{eq_aep} \textbf{P}(\exp{(\mathbb{E}_{M}-\epsilon)} \leq
(\frac{\textbf{P}_p(w)}{\textbf{P}_q(w)})^{1/L} \leq
\exp{(\mathbb{E}_{M}+\epsilon)} ) \geq 1-\alpha^L,
\end{equation}
where $\mathbb{E}_{M} = E_{\rho}(F)$.

\begin{lemma}
\label{lem_about_avg}$\mathbb{E}_{M} > 0$ for each $M \in
\mathrsfs{M}$.
\end{lemma}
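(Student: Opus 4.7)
The plan is to recognize $\mathbb{E}_M$ as a weighted average of Kullback--Leibler divergences and then invoke the $\epsilon$-machine distinguishability axiom to rule out degeneracy. Using Remark~\ref{rem_edge_prob} together with $\textbf{P}_{(p,q)}(x) = \textbf{P}_p(x)$ for deadlock pairs $(p,q) \in M$ (which holds whenever $p.x$ is defined, i.e.\ whenever the edge exists in $M$), I would expand
\[
\mathbb{E}_M \;=\; \sum_{(p,q) \in M} \rho_{(p,q)} \sum_{x \in \Sigma} \textbf{P}_p(x) \ln \frac{\textbf{P}_p(x)}{\textbf{P}_q(x)} \;=\; \sum_{(p,q) \in M} \rho_{(p,q)}\, D\bigl(\textbf{P}_p(\cdot)\,\|\,\textbf{P}_q(\cdot)\bigr),
\]
where $D$ denotes the KL divergence between the letter-emission distributions at $p$ and $q$. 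The letters on which $p$ is undefined cause no trouble: because $M$ consists of \emph{deadlock} pairs, $p.x$ is defined if and only if $q.x$ is, so the corresponding summand vanishes under the standard $0\ln 0 := 0$ convention and the divergence is always finite.

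By Gibbs' inequality each KL term is non-negative, so $\mathbb{E}_M \geq 0$ is immediate. To upgrade this to strict positivity I would argue by contradiction. Irreducibility of $M$ (inherited from strong connectedness and aperiodicity of $\mathrsfs{A}$) yields $\rho_{(p,q)} > 0$ for every $(p,q) \in M$, so $\mathbb{E}_M = 0$ would force $\textbf{P}_p(\cdot) = \textbf{P}_q(\cdot)$ as distributions on $\Sigma$ for \emph{every} $(p,q) \in M$.

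The key step is then lifting this one-letter equality to arbitrary words. Because $M$ is closed under the letter actions, $(p.x, q.x) \in M$ whenever $p.x$ is defined, so I can induct on $|w|$: for $w = xu$ with $p.x$ defined,
\[
\textbf{P}_p(w) = \textbf{P}_p(x)\textbf{P}_{p.x}(u) = \textbf{P}_q(x)\textbf{P}_{q.x}(u) = \textbf{P}_q(w),
\]
combining the hypothetical base equality at the letter $x$ with the inductive hypothesis at the pair $(p.x, q.x) \in M$; the case of undefined $p.x$ gives $\textbf{P}_p(w) = \textbf{P}_q(w) = 0$ by the same observation as above. Hence $\textbf{P}_p = \textbf{P}_q$ on all of $\Sigma^*$, directly contradicting the third defining property of an $\epsilon$-machine. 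This contradiction forces $\mathbb{E}_M > 0$.

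The only delicate point is the tight interplay between the deadlock condition and the definedness of letters at both coordinates of a pair: this symmetry is what keeps the KL divergence finite in the first place and simultaneously makes the inductive propagation go through without side conditions. Once it is noted, the proof reduces to Gibbs' inequality plus the distinguishability axiom of $\epsilon$-machines.
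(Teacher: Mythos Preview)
Your argument is correct and follows the same skeleton as the paper: decompose $\mathbb{E}_M$ as $\sum_{(p,q)\in M}\rho_{(p,q)}\sum_x \textbf{P}_p(x)\ln\frac{\textbf{P}_p(x)}{\textbf{P}_q(x)}$, show each inner sum is non-negative, and rule out simultaneous vanishing via the distinguishability axiom.

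The differences are purely in execution. Where you invoke Gibbs' inequality (non-negativity of KL divergence) as a known fact, the paper reproves it from scratch by a Lagrange-multiplier minimization of $\sum_x \textbf{P}_p(x)(\ln\textbf{P}_p(x)-\ln z_x)$ over the simplex $\sum_x z_x=1$; this yields the same conclusion (minimum $0$, attained iff $z_x=\textbf{P}_p(x)$) but is heavier machinery for a standard inequality. Conversely, your treatment of the strict-positivity step is more complete: the paper simply writes ``due to the condition of states non-equivalence'' and stops, whereas you spell out the induction on $|w|$ that lifts $\textbf{P}_p(x)=\textbf{P}_q(x)$ for all $(p,q)\in M$ and all letters $x$ to $\textbf{P}_p(w)=\textbf{P}_q(w)$ for all words, using closure of $M$ under the letter actions. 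That induction is exactly the missing link between one-letter equality and the word-level distinguishability axiom, so your version actually closes a small gap the paper leaves implicit.
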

\begin{proof}
It is enough to prove that for each pair $(p,q)$ of $M$ the sum
$\sum_{x \in \Sigma_p }{\rho_{(p,q,x)} F((p,q,x))}$ is non-negative
and there is a pair for which this sum is positive. Since
$\rho_{(p,q,x)} = \rho_{(p,q)} \textbf{P}_p(x)$, by
Remark~\ref{rem_edge_prob} this sum equals
$$\rho_{(p,q)}\sum_{x \in
\Sigma_p}{\textbf{P}_p(x)(\ln{\frac{\textbf{P}_p(x)}{\textbf{P}_q(x)}})}.$$

Let us use Lagrange's method to prove the inequalities. Consider the
function $$\phi \doteq \phi(z_{x_1}, z_{x_2}, \dots , z_{x_{|\Sigma_p|}},
\lambda) = \sum_{x \in \Sigma_p}{\textbf{P}_p(x)(\ln{\textbf{P}_p(x)}
- \ln{z_x}) + \lambda z_x} - \lambda.$$ By taking derivatives for
each variable we get that $\phi$ can have minimum only in the
solution points of the system
\begin{equation}
\begin{cases}
\phi_{z_x}' = \lambda - \frac{\textbf{P}_p(x)}{z_x},\\
\sum_{x \in \Sigma_p}{z_x} = 1;
\end{cases}
\end{equation}
and in the boundary points $z_x = 1$ (since $\phi \rightarrow
+\infty$ for $z_x \rightarrow +0$). If for some $x, z_x = 0$ then
for each $y \neq x$ other $z_y = 0 = \textbf{P}_p(y)$ and $\phi =
0$. In the opposite case, we have $\lambda =
\frac{\textbf{P}_p(x)}{z_x}$ for each $x \in \Sigma_p$. Since
$$\sum_{x \in \Sigma_p}{z_x} = \sum_{x \in \Sigma_p}{\textbf{P}_p(x)} = 1,$$
this can happen only if $\textbf{P}_p(x) = z_x$ for each $x \in
\Sigma_p$ whence $\phi = 0$.

Thus $\phi = 0$ if and only if $\textbf{P}_p(x) = z_x$ for each $x
\in \Sigma_p$ such that $\textbf{P}_p(x) > 0$ and $\phi > 0$
otherwise. Due to the condition of states non-equivalence we get
that for some pair $(p,q)$ in $M$ this condition doesn't hold. The
lemma follows.
\end{proof}

\begin{lemma}
\label{lem_lower_bound}For each $M \in \mathrsfs{M}$ and each
$\delta>0$ we have $\mathbb{E}(1/Q_L^{1/L}) \leq
\exp{(\mathbb{E}_{M})} + \delta$ for $L$ big enough.
\end{lemma}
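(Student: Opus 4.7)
The plan is to combine Lemma~\ref{lem_ql_bounds} with the concentration inequality~\eqref{eq_aep} derived from the Markov-chain large deviations in Lemma~\ref{lem_mc_vars}. From Lemma~\ref{lem_ql_bounds} one has $1/Q_L(w) \le c_1^{-1}\textbf{P}_{f_w}(w)/\textbf{P}_{s_w}(w)$, so after taking $L$-th roots and expectations,
$$\mathbb{E}\bigl(1/Q_L^{1/L}\bigr) \;\le\; c_1^{-1/L}\,\mathbb{E}\Bigl(\bigl(\textbf{P}_{f_w}(w)/\textbf{P}_{s_w}(w)\bigr)^{1/L}\Bigr),$$
and since $c_1^{-1/L}\to 1$ it suffices to bound the expected ratio by $\exp(\mathbb{E}_M)+\delta/2$.

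First, I would partition the length-$L$ words into a typical event $G_L$ on which (i) the observer-relevant pair enters the deadlock component $M$ after only a sublinear transient --- by Remark~\ref{rem_out_of_M} the failure probability is $O(c^L)$ --- and (ii) the empirical log-ratio $\overline{Y_L}=(1/L)\ln(\textbf{P}_{f_w}(w)/\textbf{P}_{s_w}(w))$ for the edge-machine $M_{\mathrm{edge}}$ lies within $\epsilon$ of $\mathbb{E}_M$, which by~\eqref{eq_aep} fails with probability $O(\alpha^L)$. On $G_L$ the bound $\bigl(\textbf{P}_{f_w}(w)/\textbf{P}_{s_w}(w)\bigr)^{1/L} \le \exp(\mathbb{E}_M+\epsilon)(1+o(1))$ follows, the $1+o(1)$ absorbing the bounded contribution of the pre-entry transient after taking $L$-th roots.

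Second, on the atypical event $G_L^c$ I would show that $1/Q_L^{1/L}$ is uniformly bounded in $L$. Let $p_{\min}>0$ denote the smallest positive transition probability of $\mathrsfs{A}$; any realised length-$L$ path from $s_w$ has $\textbf{P}_{s_w}(w) \ge p_{\min}^L$, so by Lemma~\ref{lem_ql_bounds} $Q_L(w) \ge c_1 p_{\min}^L$, whence $1/Q_L^{1/L} \le c_1^{-1/L}/p_{\min}=O(1)$. Summing the two contributions gives
$$\mathbb{E}(1/Q_L^{1/L}) \;\le\; \exp(\mathbb{E}_M+\epsilon)\,(1+o(1)) + O(\alpha^L + c^L),$$
and choosing $\epsilon$ small (using continuity of $\exp$) and $L$ large yields $\exp(\mathbb{E}_M)+\delta$.

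The main obstacle is the transient-prefix step. The concentration bound~\eqref{eq_aep} is stated for $M_{\mathrm{edge}}$ launched \emph{inside} the component $M$, whereas the oracle pair $(f_w,s_w)$ determined by the generated word need not coincide with any pair being tracked along a given trajectory of $\mathrsfs{A}_2$, and certainly need not start in $M$. The crux is to show that on $G_L$ one may factor $w=w'w''$ with $w''$ of linear length such that $\textbf{P}_{f_w}(w)/\textbf{P}_{s_w}(w) = \Theta\bigl(\textbf{P}_{p}(w'')/\textbf{P}_{q}(w'')\bigr)$ for a pair $(p,q)\in M$ to which~\eqref{eq_aep} applies, with the multiplicative constant dying under $1/L$-th powers, and to verify that on the typical event the identities of $f_w$ and $s_w$ are stable with the dominant pair inside $M$.
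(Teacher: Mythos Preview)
Your scaffolding --- Lemma~\ref{lem_ql_bounds}, the concentration estimate~\eqref{eq_aep}, and the uniform $O(1/p_{\min})$ bound on the exceptional set --- matches the paper's. The gap is precisely the obstacle you flag at the end: the pair $(f_w,s_w)$ is a function of the \emph{entire} word $w$, so it does not make sense to speak of it ``entering $M$ after a sublinear transient'', nor is there any reason it should reach the \emph{particular} $M$ named in the lemma rather than some other deadlock component. Your proposed factorisation $w=w'w''$ with $\textbf{P}_{f_w}(w)/\textbf{P}_{s_w}(w)=\Theta(\textbf{P}_p(w'')/\textbf{P}_q(w''))$ for some $(p,q)\in M$ is not carried out, and the ``stability of $f_w,s_w$ with the dominant pair inside $M$'' you appeal to is exactly the missing step.

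The paper sidesteps this by an anchoring trick that avoids tracking $(f_w,s_w)$ altogether. It first argues that the limit of $\mathbb{E}(1/Q_L^{1/L})$ does not depend on the generating state, then \emph{fixes} that state to be some $p$ for which a pair $(p,g)$ lies in the given $M$. Now case-split on $f_w$. If $f_w=p$, bound $\textbf{P}_{s_w}(w)\ge\textbf{P}_g(w)$; the ratio becomes $\textbf{P}_p(w)/\textbf{P}_g(w)$ with $(p,g)$ already in $M$, and~\eqref{eq_aep} gives the factor $\exp(\mathbb{E}_M+\epsilon)$ directly. If $f_w=q\neq p$, bound $\textbf{P}_{s_w}(w)\ge\textbf{P}_p(w)$ (the generating state itself); the ratio becomes $(\textbf{P}_q(w)/\textbf{P}_p(w))^{1/L}$, and since $w$ is generated from $p$, the pair $(p,q)$ drifts via Remark~\ref{rem_out_of_M} into some component $M'$, where~\eqref{eq_aep} forces this ratio near $\exp(-\mathbb{E}_{M'})<1$ by Lemma~\ref{lem_about_avg}. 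Every term is thus bounded by $\exp(\mathbb{E}_M+\epsilon)$ on the typical event, and the random identities of $f_w,s_w$ never enter.
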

\begin{proof}
First notice that for each pair of states $p_1,p_2$ we get
$$\mathbb{E}(1/Q_L(w \mid p_1)^{1/L}) / \mathbb{E}(1/Q_L(w \mid
p_2)^{1/L}) \rightarrow 1.$$ This follows from the fact that one can
assign to each word $u_1$ generated from $p_1$ the unique word
$v_{p_1,p_2} u_1$ generated from $p_2$, where $v_{p_1,p_2}$ is a
fixed word labeling some path from $p_2$ to $p_1$.
This means that we can take any state $p$ and consider only words
generated from $p$ to estimate the limit of $\mathbb{E}(1/Q_L^{1/L})$.

Given a word $w$ of length $L$ generated from $p$ by
Lemma~\ref{lem_ql_bounds} we get
$$\frac{\textbf{P}_p(w)}{{Q_L}_{p}(w)^{1/L}} \leq \textbf{P}_{p}(w)(\frac{\textbf{P}_{f_w}(w)}{\textbf{P}_{s_w}(w)})^{1/L} .$$
If $f_w \neq p$ then $\textbf{P}_{s_w}(w) \geq \textbf{P}_{p}(w)$.
In the opposite case, $\textbf{P}_{s_w}(w) \geq \textbf{P}_{q}(w)$
for any $q \neq p$. Hence summing up for all length $L$ words we
obtain
\begin{multline}
\mathbb{E}(1/Q_L^{1/L}) \leq \sum_{q \neq p}{\sum_{w \mid f_w = q}{\textbf{P}_{p}(w)(\frac{\textbf{P}_{q}(w)}{\textbf{P}_{p}(w)})^{1/L}}} + \\
+ \sum_{w \mid f_w =
p}{\textbf{P}_{p}(w)(\frac{\textbf{P}_{p}(w)}{\textbf{P}_g(w)})^{1/L}},
\end{multline}
where $g$ is some state such that the pair $(p,g)$ belongs to $M$.
Suppose $(p,q)$ doesn't belong to any $M \in \mathrsfs{M}$. By
Remark~\ref{rem_out_of_M} and inequality~\ref{eq_aep} for both
summands and each $\epsilon>0$ for $L$ big enough we have
\begin{multline}
\mathbb{E}(1/Q_L^{1/L}) \leq (1-\alpha^{L})(\sum_{(p',q') \mid M_{p',q'} }{\frac{\lambda_{p',q'}}{\exp{(\mathbb{E}(M_{p',q'})-\epsilon)}}} + \\
+ \lambda_p \exp{(\mathbb{E}(M)+\epsilon)}) + (c^{0.5L} +
\alpha^{0.5 L})D,
\end{multline}
where $M_{p',q'} \in \mathrsfs{M}$ contains the pair $(p',q')$,
$$\lambda_{p',q'} = \sum_{w \mid f_w = q, (p,q).w[0..0.5L] = (p',q')}{\textbf{P}_{p}(w)} \geq 0, \sum{\lambda_{p',q'}} + \lambda_p = 1,$$ $0 < \alpha < 1$ is from
inequality~\ref{eq_aep} and $D$ is the maximum of
$\frac{1}{\textbf{P}_{q}(x)}$ among all $x \in \Sigma$ and $q \in Q$
such that $\textbf{P}_{q}(x) > 0$. Since
$\exp{(\mathbb{E}(M_{p,q}))}>1$ and for each $\epsilon$ this
inequality holds for $L$ big enough, the lemma follows.
\end{proof}

\begin{lemma}
\label{lem_upper_bound} For $M \in \mathrsfs{M}$ with minimal
$\mathbb{E}_{M}$ and each $\beta>0$ we have $\mathbb{E}(Q_L^{1/L})
\leq 1/\exp{(\mathbb{E}_{M})} + \beta$ for $L$ big enough.
\end{lemma}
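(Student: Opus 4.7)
My plan is to dualise the argument of Lemma~\ref{lem_lower_bound}, inverting all the ratios, since the bound we need is the mirror image of the one established there. By Lemma~\ref{lem_ql_bounds}, $Q_L(w)^{1/L}\le c_2^{1/L}\bigl(\textbf{P}_{s_w}(w)/\textbf{P}_{f_w}(w)\bigr)^{1/L}$, and $c_2^{1/L}\to 1$ as $L\to\infty$. The reduction to a fixed initial state $p$ carries over verbatim from the opening of the proof of Lemma~\ref{lem_lower_bound} (coupling words from $p_1$ and $p_2$ via a common prefix), so it suffices to estimate $\sum_{|w|=L}\textbf{P}_p(w)\bigl(\textbf{P}_{s_w}(w)/\textbf{P}_{f_w}(w)\bigr)^{1/L}$.

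I would then partition this sum over words $w$ according to the pair $(p',q')=(f_w,s_w).w[0..0.5L]$ reached halfway through $w$. Crucially, $\textbf{P}_{s_w}(w)\le\textbf{P}_{f_w}(w)$ by the definition of $f_w$, so the summand is pointwise bounded by $\textbf{P}_p(w)$. Hence the contribution of words for which $(p',q')$ fails to lie in any deadlock component is at most the total probability of such words, which by Remark~\ref{rem_out_of_M} is $\le c^{0.5L}$ and vanishes as $L\to\infty$.

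For the remaining words, write $\lambda_{M}$ for the probability (from $p$) of landing in the deadlock component $M\in\mathrsfs{M}$ at time $0.5L$. On such words I would apply inequality~\eqref{eq_aep} on the second half of $w$, with the roles of $p$ and $q$ swapped, to the edge-machine of $M$: with probability at least $1-\alpha^{0.5L}$ on the second half, $\bigl(\textbf{P}_{s_w}(w)/\textbf{P}_{f_w}(w)\bigr)^{1/L}\le\exp(-\mathbb{E}_{M}+\varepsilon)$, while on the complementary event the ratio is trivially bounded by $1$. Summing and regrouping by $M$, the expectation is at most $\sum_{M\in\mathrsfs{M}}\lambda_{M}\exp(-\mathbb{E}_{M}+\varepsilon)+o(1)$ with $\sum_{M}\lambda_{M}\le 1$. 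Since $\exp(-\mathbb{E}_{M})$ attains its maximum precisely at the $M$ with minimal $\mathbb{E}_{M}$, this convex combination is at most $\exp(-\mathbb{E}_{M}+\varepsilon)$ for that distinguished $M$; letting $\varepsilon\to 0$ and absorbing the $o(1)$ term into $\beta$ yields the claim.

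The most delicate point, already implicit in the proof of Lemma~\ref{lem_lower_bound}, is reconciling the $0.5L/0.5L$ split with the fact that~\eqref{eq_aep} applied to the length-$L/2$ suffix only directly controls the $(2/L)$-th root of the ratio rather than the $(1/L)$-th. To recover the tight $\exp(-\mathbb{E}_{M})$ bound rather than a spurious $\exp(-\mathbb{E}_{M}/2)$, the split should in fact be taken at position $\varepsilon L$ for a small constant $\varepsilon>0$; the short prefix is then bounded by the trivial per-symbol factor $D^{\varepsilon}$, which tends to $1$ as $\varepsilon\to 0$ and so disappears into the additive $\beta$. This bookkeeping is where the argument requires the most care.
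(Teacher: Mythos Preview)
Your approach is essentially the same as the paper's: bound $Q_L^{1/L}$ via Lemma~\ref{lem_ql_bounds}, use Remark~\ref{rem_out_of_M} to discard words whose $(f_w,s_w)$-pair has not reached a deadlock component by the split point, and then apply inequality~\eqref{eq_aep} on the remaining tail. Your treatment is in fact more careful than the paper's on the point you flag as delicate: the paper splits at $0.5L$ and writes the exponent $1/(0.5L)$ without explaining how the resulting bound collapses to $\exp(-\mathbb{E}_M)$ rather than $\exp(-\mathbb{E}_M/2)$, whereas your $\varepsilon L$ split with $\varepsilon\to 0$ makes this transparent.
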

\begin{proof}
By Lemma~\ref{lem_ql_bounds} we have $Q_L(w)^{1/L} \leq (c_2
\frac{\textbf{P}_{s_w}(w)}{\textbf{P}_{f_w}(w)})^{1/L}$. Using
Remark~\ref{rem_out_of_M} we also get that $$\mathbb{E}(Q_L^{1/L})
\leq c^{0.5L} + \sum_{(p,q) \in \mathrsfs{M}}{ \lambda_{p,q}
(c_2\frac{\textbf{P}_{q}(w)}{\textbf{P}_{p}(w)})^{1/0.5L} },$$ where
$\sum{\lambda_{p,q}} = 1$.

Thus by inequality~\ref{eq_aep} for each $\epsilon>0$ we get
\begin{multline}
\mathbb{E}(Q_L^{1/L}) \leq c^{0.5L} + \alpha^{0.5L} +
(1-\alpha^{0.5L})\sum_{(p,q) \in \mathrsfs{M}}{ \lambda_{p,q}
(c_2)^{1/0.5L}{\frac{\lambda_{p,q}}{\exp{(\mathbb{E}(M_{p,q})-\epsilon)}}}}.
\end{multline}
for $L$ big enough. The lemma follows.
\end{proof}

Now our second main result follows from Lemma~\ref{lem_lower_bound}
and Lemma~\ref{lem_upper_bound}.
\begin{theorem}
\label{th_main_pred}Let $\mathrsfs{A}$ be a non-exact
$\epsilon$-machine. Then the prediction rate of $\mathrsfs{A}$ is
given by the maximum of $1/\exp{(\mathbb{E}(M))}$ among $M \in
\mathrsfs{M}$. Therefore since $\mathbb{E}(M)>0$ by
Lemma~\ref{lem_about_avg} $pr(\mathrsfs{A}) < 1$ and $\mathrsfs{A}$
 is asymptotically synchronizable.
\end{theorem}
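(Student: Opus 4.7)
The plan is to derive Theorem~\ref{th_main_pred} as a direct combination of the auxiliary lemmas already in place, since the probabilistic heavy lifting has been packaged into Lemmas~\ref{lem_calc_PR}--\ref{lem_upper_bound} and only a sandwich argument remains. First I would invoke Lemma~\ref{lem_calc_PR} to bracket $prc(\mathrsfs{A})$ between $1/\lim_{L} \mathbb{E}(1/Q_L^{1/L})$ and $\lim_{L} \mathbb{E}(Q_L^{1/L})$, reducing the task to matching both ends with $\max_{M \in \mathrsfs{M}} 1/\exp(\mathbb{E}(M))$.

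Next, let $M^{*} \in \mathrsfs{M}$ be a component achieving the minimum of $\mathbb{E}_{M}$, so that $\max_{M \in \mathrsfs{M}} 1/\exp(\mathbb{E}(M)) = 1/\exp(\mathbb{E}_{M^{*}})$. The upper bound $prc(\mathrsfs{A}) \leq 1/\exp(\mathbb{E}_{M^{*}})$ falls out of Lemma~\ref{lem_upper_bound} (already phrased for this minimizing $M^{*}$) after letting $\beta \downarrow 0$. For the matching lower bound I would specialize Lemma~\ref{lem_lower_bound} to $M = M^{*}$: it yields $\mathbb{E}(1/Q_L^{1/L}) \leq \exp(\mathbb{E}_{M^{*}}) + \delta$ for $L$ big enough, hence $\liminf_{L} 1/\mathbb{E}(1/Q_L^{1/L}) \geq 1/(\exp(\mathbb{E}_{M^{*}}) + \delta)$; letting $\delta \downarrow 0$ and combining with the left inequality of Lemma~\ref{lem_calc_PR} gives $prc(\mathrsfs{A}) \geq 1/\exp(\mathbb{E}_{M^{*}})$. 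The two bounds meet, establishing the claimed formula.

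The second clause of the theorem then requires only Lemma~\ref{lem_about_avg}: since each $\mathbb{E}(M)$ is strictly positive, $1/\exp(\mathbb{E}(M)) < 1$ for every $M \in \mathrsfs{M}$, and since $\mathrsfs{M}$ is finite its maximum is strictly less than $1$ as well; consequently $prc(\mathrsfs{A}) < 1$, which is exactly the quantitative form of asymptotic synchronizability. I do not anticipate any genuine obstacle in this final step; the only point demanding a moment of care is the quantifier in Lemma~\ref{lem_lower_bound}, phrased ``for each $M \in \mathrsfs{M}$'', so that one must specialize to the same minimizing $M^{*}$ appearing in Lemma~\ref{lem_upper_bound} in order to make both sandwich bounds coincide at the same value.
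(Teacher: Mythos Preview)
Your proposal is correct and matches the paper's own argument: the paper simply states that the theorem ``follows from Lemma~\ref{lem_lower_bound} and Lemma~\ref{lem_upper_bound}'', and you have spelled out exactly the sandwich via Lemma~\ref{lem_calc_PR} that this remark presupposes. Your observation about specializing Lemma~\ref{lem_lower_bound} to the minimizing component $M^{*}$ is the right way to make the two bounds meet.
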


The following corollary follows from the fact that the dominant
eigenvectors of $M  \in \mathrsfs{M}$ can be approximated in
polynomial time.
\begin{corollary}
Given a non-exact $\epsilon$-machine $\mathrsfs{A}$, the prediction
rate of $\mathrsfs{A}$ can be approximated in polynomial time within
any given precision.
\end{corollary}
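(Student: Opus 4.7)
By Theorem~\ref{th_main_pred}, $prc(\mathrsfs{A}) = \max_{M \in \mathrsfs{M}} \exp(-\mathbb{E}(M))$, so approximating the prediction rate within precision $\varepsilon$ reduces to approximating each $\mathbb{E}(M)$ within precision $O(\varepsilon)$ (since $x \mapsto e^{-x}$ is Lipschitz on any fixed bounded interval), and then taking the maximum over the finite family $\mathrsfs{M}$. The plan is therefore to (i) enumerate the family $\mathrsfs{M}$ of deadlock components of $\mathrsfs{A}_2$, (ii) build the edge-machine $M_{edge}$ for each $M \in \mathrsfs{M}$, (iii) approximate its equilibrium distribution $\rho$, and (iv) approximate the expectation $\mathbb{E}(M) = E_\rho(F)$, where $F$ is the log-probability-ratio function introduced before Lemma~\ref{lem_about_avg}.

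For step (i), I would first construct $\mathrsfs{A}_2$, which has $n(n-1)$ states and can be built in time polynomial in $n$ and $|\Sigma|$. The deadlock pairs are exactly those from which every letter keeps the pair distinct along every path; equivalently, $\mathrsfs{M}$ consists of the strongly connected components of $\mathrsfs{A}_2$ that are closed under all defined letter actions, which can be identified in polynomial time by standard SCC computation together with a closure check. For step (ii), each $M_{edge}$ has $O(n^2 |\Sigma|)$ states and transition weights that are either $0$ or products of probabilities already given in $\mathrsfs{A}$. For step (iii), since each $M_{edge}$ is a finite ergodic Markov chain, its equilibrium distribution $\rho$ is the unique stochastic solution of the linear system $\rho\,T = \rho$, $\sum \rho_i = 1$, and can be approximated to any absolute precision $\eta > 0$ in time polynomial in the chain size and $\log(1/\eta)$ (by Gaussian elimination with rational arithmetic, or by power iteration on a contractive stochastic matrix).

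For step (iv), once $\rho$ is known to precision $\eta$, the expectation $\mathbb{E}(M) = \sum_{r=(p,q,x)} \rho_{r} \ln \frac{\textbf{P}_p(x)}{\textbf{P}_q(x)}$ is a weighted sum of $O(n^2 |\Sigma|)$ logarithms. Each logarithm $\ln(\textbf{P}_p(x)/\textbf{P}_q(x))$ is bounded in absolute value by $\ln(1/p_{\min})$, where $p_{\min}$ is the smallest positive transition probability of $\mathrsfs{A}$; this is a fixed quantity in the input. Using any standard polynomial-time algorithm for $\ln$, each term can be computed within precision $\eta$ in time polynomial in $n$, $|\Sigma|$, $\log(1/\eta)$, and the bit-length of the probabilities. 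Combining, $\mathbb{E}(M)$ is approximated within an absolute error of order $n^2|\Sigma|\,\eta\,\ln(1/p_{\min})$; choosing $\eta = \varepsilon / (Cn^2|\Sigma|\ln(1/p_{\min}))$ for a suitable constant $C$ yields the desired precision.

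The main obstacle is the bookkeeping of error propagation: one must track how errors in $\rho$ and in the evaluation of $F$ combine, and in particular ensure that no $\textbf{P}_q(x)$ appearing in a denominator of $F$ is small enough to blow up the logarithm by more than the explicit bound $\ln(1/p_{\min})$ — which is why $F$ is only defined on edges $(p,q,x)$ with $\textbf{P}_p(x)>0$ (and hence, since $(p,q)$ is a deadlock, also $\textbf{P}_q(x)>0$ whenever $q.x$ is defined and distinct from $p.x$). Everything else — enumerating $\mathrsfs{M}$, building $M_{edge}$, approximating $\rho$, summing, and exponentiating — is standard polynomial-time linear algebra and numerical computation.
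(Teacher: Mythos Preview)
Your proposal is correct and follows essentially the same approach as the paper: the paper's one-line justification is that the dominant eigenvectors (i.e., the equilibrium distributions) of the components $M\in\mathrsfs{M}$ can be approximated in polynomial time, from which $\mathbb{E}(M)=E_\rho(F)$ and hence $prc(\mathrsfs{A})=\max_M e^{-\mathbb{E}(M)}$ are obtained. You simply spell out the steps (enumerating $\mathrsfs{M}$, building $M_{edge}$, solving for $\rho$, bounding the log terms via $p_{\min}$, and tracking error propagation) that the paper leaves implicit; by Remark~\ref{rem_edge_prob}, working with the stationary distribution of $M_{edge}$ or of $M$ itself is equivalent.
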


\section{Conclusions}

Thus it turns out that there are polynomial-time approximation
schemes (PTAS) for computing the natural measures of synchronization
for stochastic setting for both exact and asymptotic cases while no
polynomial time algorithm can approximate the reset threshold even
with logarithmic performance ratio, and even for the binary alphabet
case~\cite{On2Problems} (see also
\cite{MyTOCS2013},\cite{Ep90},\cite{Gerb1}) unless $P=NP$.

The author is grateful to Prof. J\"{u}rgensen for his generous
support during conducting this research.

\end{document}